\documentclass{ifacconf}

\usepackage{graphicx}      % include this line if your document contains figures
\graphicspath{{visuals/pictures}} 

\usepackage{xcolor}
\usepackage{natbib}        % required for bibliography

\usepackage{amsmath}
\usepackage{amssymb}
\usepackage{mathtools}

\usepackage{tikz}
\usetikzlibrary{shadings}
\usetikzlibrary{calc}
\usetikzlibrary{intersections}
\usetikzlibrary{shapes,arrows}
\usetikzlibrary{backgrounds,fit,positioning}

\usepackage{pgfplots}
\usepgfplotslibrary{fillbetween}

\usetikzlibrary{positioning,plotmarks, matrix, arrows, calc, shapes}
\tikzstyle{blockdiag}	= [node distance=7mm, >=stealth', semithick]
\tikzstyle{block}		= [draw, rectangle, minimum width=1.05cm, minimum height=.8cm, align=center]
\tikzstyle{sum} 		= [draw,circle,inner sep=0pt, minimum size=5pt]
\tikzstyle{connector} 	= [draw,circle,inner sep=0pt, minimum size=0.01pt, fill=black, fill opacity=0,draw opacity=0]
\tikzstyle{connector2}  = [draw,circle,inner sep=0pt, minimum size=2pt, fill=black]
\tikzstyle{gain} 		= [draw, regular polygon, regular polygon sides=3, thick, minimum height=3em, minimum width=4em, rotate=30]
\tikzstyle{bguide} 		= [rectangle, minimum height=3em, minimum width=4em]
\tikzstyle{line} 		= [thick]
\tikzstyle{branch}	    = [circle,inner sep=0pt,minimum size=1mm,fill=black,draw=black]
\tikzstyle{guide} 		= [anchor=center]
\tikzstyle{box} = [draw=white, rectangle,minimum width=1.05cm, minimum height=.8cm,, fill=gray, fill opacity=0.2]
\tikzstyle{textbox} = [draw=white, draw opacity=0, rectangle,minimum width=1.05cm, minimum height=.8cm]

\newcommand{\norm}[1]{\left\|#1\right\|}
\newcommand{\abs}[1]{\left|#1\right|}
\newcommand{\field}[1]{\mathbb{#1}}
\newcommand{\R}{\field{R}}
\newcommand{\RH}{\field{RH}_\infty}

\newcommand{\bmtx}{\begin{bmatrix}}
	\newcommand{\emtx}{\end{bmatrix}}
\newcommand{\bsmtx}{\left[ \begin{smallmatrix}} 
	\newcommand{\esmtx}{\end{smallmatrix} \right]} 
\newcommand{\bmatarray}[1]{\left[\begin{array}{#1}}
	\newcommand{\ematarray}{\end{array}\right]}

\begin{document}
	\begin{frontmatter}
		
		\title{Finite-Horizon Robustness Analysis under Mixed Disturbances using Signal-IQCs} 
		% Title, preferably not more than 10 words.
		
		\thanks[footnoteinfo]{This research was funded by the German Federal Ministry for Economic Affairs and Climate Action under grant number 20Y2109E and partially supported by the European Union under Grant No. 101153910. The responsibility for the content of this paper is with its authors.}
		
		\author[First]{Frederik Thiele} 
		\author[First]{Harald Pfifer}
		\author[First]{Felix Biertümpfel}

		\address[First]{Chair of Flight Mechanics and Control, Technische Universität Dresden, 01069 Dresden, Germany (e-mail: \{frederik.thiele, harald.pfifer, felix.biertuempfel\}@tu-dresden.de) }
		%\address[Second]{Colorado State University, 
			%   Fort Collins, CO 80523 USA (e-mail: author@lamar. colostate.edu)}
		%\address[Third]{Electrical Engineering Department, 
			%   Seoul National University, Seoul, Korea, (e-mail: author@snu.ac.kr)}
		
		\begin{abstract}                % Abstract of 50--100 words
			Common worst-case analyses for uncertain finite-horizon systems consider quadratic performance metrics based on the strict Bounded Real Lemma. Thus, they assess system performance for bounded inputs, e.g., signals in L2, which exhibit a worst-case shape. Consequently, known disturbance characteristics are left unexploited and uncovered, leading to unnecessarily conservative results. The present paper develops a worst-case analysis covering arbitrarily L2-bounded worst-case signals and partially known disturbances simultaneously. This is achieved by modeling the latter using signal integral-quadratic constraints (IQCs). The resulting analysis condition relies on a dissipation inequality within the IQC framework for finite time horizon problems. 
			This framework also readily allows to incorporate additional system uncertainties in the analysis. The approach's feasibility is demonstrated with the worst-case performance analysis of a small unmanned aerial vehicle in an urban environment.  
			
			%The paper presents a novel approach for the robustness analysis of uncertain systems over finite time horizon under known and unknown disturbances. Common worst-case analyses derived from the strict Bounded Real Lemma are limited to the analysis of arbitrary L2 bounded disturbances. This yields unnecessary conservative results, if some disturbances are (partially) known a priori. The presented approach circumvents this shortcoming by employing signal-based integral-quadratic constraints (IQCs) to model known disturbances, while maintaining the worst-case nature of others. The calculation of the worst-case upper bound relies on the dissipation inequalities inside the IQC framework. A numerical example is provided to demonstrate the feasibility of the proposed approach. 
			
			%Exploiting a recent extension of the strict BRL to IQCs, the analysis can also cover system uncertainties.
			%The approach further includes a bound for the error associated with the time-varying linearization.  Hence, the results obtained in the linear framework provide a valid upper bound for the worst-case performance of the nonlinear system. The calculation of the upper bound relies on the dissipation inequalities formulated in the framework of integral quadratic constraints. It is therefore computationally much cheaper than sample-based methods such as Monte Carlo simulation. The feasibility of the approach is demonstrated on a numerical example.
		\end{abstract}
		
		\begin{keyword}
			Robust Stability and Performance, Structured and Unstructured Uncertainties, System and Uncertainty Modeling, Integral Quadratic Constraints
		\end{keyword}
		
	\end{frontmatter}
	
	%% There are a number of predefined theorem-like environments in
	%% ifacconf.cls:
	%%
	%% \begin{thm} ... \end{thm}            % Theorem
	%% \begin{lem} ... \end{lem}            % Lemma
	%% \begin{claim} ... \end{claim}        % Claim
	%% \begin{conj} ... \end{conj}          % Conjecture
	%% \begin{cor} ... \end{cor}            % Corollary
	%% \begin{fact} ... \end{fact}          % Fact
	%% \begin{hypo} ... \end{hypo}          % Hypothesis
	%% \begin{prop} ... \end{prop}          % Proposition
	%% \begin{crit} ... \end{crit}          % Criterion
	
	%% Example Figure 
	%\begin{figure}
	%	\begin{center}
		%		\includegraphics[width=8.4cm]{bifurcation}    % The printed column width is 8.4 cm.
		%		\caption{Bifurcation: Plot of local maxima of $x$ with damping $a$ decreasing} 
		%		\label{fig:bifurcation}
		%	\end{center}
	%\end{figure}
	
	%% Example Table 
	%\begin{table}[hb]
	%	\begin{center}
		%		\caption{Margin settings}\label{tb:margins}
		%		\begin{tabular}{cccc}
			%			Page & Top & Bottom & Left/Right \\\hline
			%			First & 3.5 & 2.5 & 1.5 \\
			%			Rest & 2.5 & 2.5 & 1.5 \\ \hline
			%		\end{tabular}
		%	\end{center}
	%\end{table}
	
	%===============================================================================
	
	%%%%%%%%%%%%%%%%%%%%
	%%% ------------- %%
	%%% -- Section -- %%
	%%% ------------- %%
	%%%%%%%%%%%%%%%%%%%%
	
	\section{Introduction} 
	
	The induced $L_2$-norm is a well-established metric for evaluating the performance of key control objectives such as tracking, under the assumption of arbitrary norm-bounded inputs in $L_2$ over an infinite time horizon. In practice, however, engineering problems often involve external disturbances that are not in $L_2$, but instead follow specific temporal patterns with known structure. These characteristics arise from the physical properties and environmental conditions of the application. Standard robustness analysis formulations cannot take advantage of prior knowledge and only provide results for the full set of possible disturbances. Furthermore, analyzing a finite time horizon is usually sufficient as performance requirements are often defined over specific time intervals. Typical aerospace applications include space launchers with focus on structural loads during ascent (\cite{Biertuempfel2023}); missile engagements where the final deviation from the target is critical (\cite{Buch2021}); the response of flexible aircraft to wind gusts (\cite{Iannelli2019}); or auto-land scenarios of aircraft  (\cite{Biertuempfel2022}).  
	
	% V3
	%In aerospace applications, typical examples include missile engagements, where the final deviation from the target is critical (\cite{Buch2021}); space launchers, which focus on structural loads during a specific ascent phase (\cite{Biertuempfel2023}); or the response of flexible aircraft to wind gusts (\cite{Iannelli2019}).
	
	%Typical examples within the aerospace industry include missile engagements \cite{Buch2021}, launch vehicle control \cite{Biertuempfel2023} or unmanned aerial vehicle (UAV) missions \cite{Palframan2019}.  
	
	This paper proposes a novel formulation for robustness analysis of finite-horizon uncertain systems in the presence of mixed disturbances, i.e., simultaneously considering arbitrary $L_2$-bounded inputs and signals with partially known characteristics. Such signals could be, e.g., constant biases, bounded arbitrary time-varying behavior or harmonic excitation as considered in \cite{Cheah2024}. In the presented work, the known external disturbances are expressed as internal parameter variations driven by an artificial state. This state has an unknown initial value and follows from a non-standard system augmentation. The variation is then bounded using appropriate signal integral quadratic constraints (IQCs), which are particularly suited to recover various signal characteristics. Posing the problem inside the IQC framework also facilitates the incorporation of model uncertainties and nonlinearities in the same analysis. The underlying theorem is derived for the analysis of linear time-varying (LTV) systems with uncertain initial conditions over a finite time horizon.
	
	Applying IQCs to stability and robustness analyses is an established approach introduced by \cite{Megretski1997}. Their work derives a frequency-domain stability theorem for uncertain linear time-invariant (LTI) systems based on linear matrix inequalities. 
	%Their work derives a stability theorem in the frequency domain applicable to uncertain linear time-invariant (LTI) systems based on linear matrix inequalities. 
	Alternatively, time-domain formulations of IQCs, see, e.g., \cite{Joensson1996}, essentially overcome the limitations of LTI systems.  
	%Subsequent works such as \cite{Seiler2015} and \cite{Veenman2016} include time-domain formulations of IQCs, essentially overcoming the limitation of LTI systems. An introduction is provided in Section~\ref{sec:Back}. 
	They have been applied to the robustness analysis of parameter-varying systems, e.g., in \cite{Pfifer2016}, time-periodic systems in \cite{Ossmann2019} as well as time-varying systems. The analysis of (uncertain) time-varying systems was conducted in various literature, e.g., in \cite{Cantoni2013} IQCs and a gap metric provide a stability statement; \cite{Seiler2019} discusses bounds of the reachable set and robust induced gains for uncertain finite-horizon systems; and \cite{Farhood2024} addresses uncertain initial conditions. In \cite{Biertuempfel2023_unhc} the robustness of nonlinear systems along uncertain trajectories is analyzed, using a similar system augmentation step as in the presented paper. 
	
	The IQC framework has also been applied to cover the effects of specific shapes of external signals, often referred to as signal IQCs. In the works of \cite{Jonsson2003} or \cite{AyalaCuevas2019} the impact of oscillations on the system performance is discussed. An evaluation of a flight controller for restricted disturbances is conducted in \cite{Palframan2019} using linear fractional transformation. %where a linear fractional transformation is used for the problem formulation. 
	In \cite{Fry2021} signal IQCs are applied to cover noise in a performance setting, but their approach remains limited to the induced input-output behavior. This shortcoming is addressed with the approach presented in this work.

	%% Old  
	%Applying IQCs to stability and robustness analyses is an established approach introduced by \cite{Megretski1997}. Their work derives a stability theorem in the frequency domain applicable to linear time-invariant (LTI) systems based on linear matrix inequalities and contains an extensive collection of signal IQCs. In subsequent work such as \cite{Veenman2016}, time-domain formulations of signal IQCs have been presented, essentially enabling their use outside the scope of LTI systems. The time-domain IQC where applied to the robustness analysis of parameter-varying systems, e.g. in \cite{Pfifer2016}, and later applied to the specific cases of time-periodic (\cite{Ossmann2019}) and time-varying systems. The analysis of (uncertain) time-varying system was conducted in various literature, e.g. \cite{Cantoni2013} or \cite{Farhood2024}, with the latter also accounting for uncertain initial conditions. In \cite{Seiler2019} the reachable set of states for an uncertain LTV system under $L_2$-bounded exogenous disturbance is considered and evaluated by the induced $L_2$-norm over a finite time-horizon. For a similar problem formulation, the worst-case gain is provided in \cite{Biertuempfel2023}. 
	
	This paper introduces a novel unified framework that accommodates arbitrary norm-bounded and partially known input signals to provide more realistic analysis results in Section~\ref{sec:SigIQC}. It is applied to the path following problem of an unmanned aerial vehicle (UAV) described in \cite{Bertran2025} to evaluate the tracking performance in Section~\ref{sec:Exmp}. The UAV is subject to model uncertainties and affected by external wind disturbances with known characteristics, while tracking a norm bounded reference command. The results are compared with a classical IQC finite horizon robustness analysis to demonstrate the validity of the proposed approach.
	
	%%%%%%%%%%%%%%%%%%%%
	%%% ------------- %%
	%%% -- Section -- %%
	%%% ------------- %%
	%%%%%%%%%%%%%%%%%%%%
	\section{Uncertain Systems Over Finite Horizons}\label{sec:Back}
	
	\begin{figure}[h!]
		\centering \begin{tikzpicture}[blockdiag, auto]

% Blocks
\node[block, minimum width=1.2cm](G){$G$};
\node[block, above = of G, minimum width = 1.2cm, yshift = -10](Delta){$\Delta$}; 

% Input and output of System 
\draw[<-] ($(G.north west)!0.66!(G.south west)$) -- +(-.8cm, 0) node[left]{$d$};
\draw[->] ($(G.north east)!0.66!(G.south east)$) -- +(.8cm, 0) node[right]{$e$};

% Connection to Delta Block
\draw[->] ($(G.north east)!0.33!(G.south east)$) -- +(.5cm, 0) |-  (Delta.east) node[right, pos = 0.25]{$v$};
\draw[<-] ($(G.north west)!0.33!(G.south west)$) -- +(-.5cm, 0) |-  (Delta.west) node[left, pos = 0.25]{$w$};

\end{tikzpicture}  
		\caption{Interconnection of a nominal LTV system $G$ and perturbation $\Delta$.}
		\label{blk:upperLFT} 
	\end{figure}
	
	An uncertain, finite-horizon linear and possibly time-varying system $\mathcal{F}_u(G, \Delta)$ is defined by the feedback interconnection of a nominal system $G$ and the perturbation $\Delta$ as pictured in Fig.~\ref{blk:upperLFT}.
	The nominal linear system $G$ is:
	\begin{equation}\label{eq:G}
		\left[\begin{array}{c} \dot{x} \\\hline v\\ e \end{array}\right] = 
		\left[\begin{array}{c|cc} A & B_w & B_d \\\hline C_v  & D_{vw} & D_{vd} \\ 
			C_e & D_{ew} & D_{ed} \end{array}\right]
		\left[\begin{array}{c} x \\\hline w \\ d \end{array}\right],
	\end{equation}
	where $x(t) \in \R^{n_{x}}$, $d(t) \in \R^{n_d}$, and $e(t) \in \R^{n_e}$ denote the state, disturbance input, and performance output vector at time $t$, respectively. The connection between the system $G$ and perturbation $\Delta$ is provided by the vectors $w(t)\in \mathbb{R}^{n_w}$ and $v(t)\in\mathbb{R}^{n_v}$, i.e. $w=\Delta(v)$.
	The matrices $A$, $B$, $C$, and $D$ are piecewise continuous locally bounded matrix-valued functions of time with dimensions corresponding to the multiplied vectors. The explicit time dependence is mostly omitted in this paper for brevity and will be clear from context. The uncertainty $\Delta: L_2^{n_v}[0,T] \rightarrow L_2^{n_w}[0,T]$ is a bounded and causal operator.
	It can describe nonlinearities like saturation, infinite-dimensional operators such as time delays, as well as dynamic and real parametric uncertainties. 
	
	In this paper, the input-output behavior of the perturbation $\Delta$ is bounded with time-domain IQCs. A time-domain IQC is defined by a stable filter $\Psi \in \RH^{n_z \times (n_v + n_w)}$ with the output $z\in\mathbb{R}^{n_z}$ and a real symmetric matrix $M = M^T \in \mathbb{R}^{n_z \times n_z}$ as described in \cite{Pfifer2016}. % $\RH^{n_z \times (n_v + n_w)}$ denotes a set of $z\times (n_v+n_w)$ matrices whose elements are rational functions with real coefficients that are in the closed right half of the complex plane. 
	If the output $z$ of the filter $\Psi$ satisfies the quadratic time-constraint
	\begin{equation}
		\label{eq:iqctd1}
		\int_0^T z(t)^T M z(t) \, dt \ge 0
	\end{equation}
	for all $v \in L_2[0,T]$ and $w=\Delta(v)$ over the interval $[0,T]$, the uncertainty $\Delta$ satisfies the IQC defined by $M$ and $\Psi$. This will be indicated by the short notation $\Delta \in IQC(\Psi,M)$. %A valid time domain IQC for a set $\mathcal{S}$ can be defined such that $\mathcal{S}\subseteq IQC(\Psi, M)$. 
	%In general, the time-domain constraint (\ref{eq:iqctd1}) for a given factorization $\Pi = \Psi^\sim M \Psi$ only holds over an infinite time horizon, i.e. $T=\infty$. \cite{Megretski2010} shows that under mild technical assumptions a large class of IQC multipliers possesses an equivalent finite horizon time-domain expression (\ref{eq:iqctd1}) .
	
	%An example for a SISO LTI dynamic uncertainty, commonly used to account for unmodeled dynamics, is given in Example \ref{exmp:LTI}:
	%%%%%%%
	%\begin{exmp}\label{exmp:LTI}
	%Let $\Delta$ be a LTI dynamic uncertainty, with $\Delta \in \RH$ and $\norm{\Delta}_{\infty} \le b \in \R$. A valid time-domain IQC for $\Delta$
	%is defined by $\Psi = \bsmtx b\psi_\nu \otimes I_{n_v} & 0 \\ 0 & \psi_\nu \otimes I_{n_v} \esmtx$ and $\mathcal{M} := \{ M = \bsmtx X\otimes I_{n_v} & 0 \\ 0 & -X\otimes I_{n_v}\esmtx : X = X^T > 0 \in \R^{(\nu+1)\times (\nu+1)}\}$.
	%A typical choice for $\psi_\nu \in \RH^{(\nu+1) \times 1}$ is:
	%\begin{equation}\label{eq:psi_nu}
	%\psi_\nu = \bmtx 1 & \frac{s+\rho}{s-\rho}& \dots & \frac{(s+\rho)^\nu}{{(s-\rho)}^\nu} \emtx^T\, , \, \rho < 0\, , \, \nu \in \mathbb{N}_0.
	%\end{equation}
	%\end{exmp}
	%In the example above, $\otimes$ denotes the Kronecker product. 

	From the worst-case analysis of nominal LTV systems in \cite{Green1994} and the finite-horizon time-domain IQC formulation of the perturbation $\Delta$, a worst-case gain condition can be derived as shown in \cite{Biertuempfel2018} and  \cite{Seiler2019}. It provides a guaranteed upper bound on the input-output behavior of an uncertain LTV system over the considered finite analysis horizon. Given a perturbation satisfying an IQC represented by $(\Psi,M)$, i.e. $\Delta \in IQC(\Psi,M)$, the interconnection $\mathcal{F}_u(G, \Delta)$ can be extended by the IQC filter $\Psi$. This procedure is illustrated in Fig.~\ref{fig:fbic_LTV_IC}.
	\begin{figure}[h!]
		\centering \begin{tikzpicture}[blockdiag, auto]

% Blocks
\node[block, minimum width=1.2cm] (G) {$G$};
\node[block, above=of G, minimum width=1.2cm, dashed, yshift = -10pt](Delta) {$\Delta$};
\node[block, above= of Delta, minimum width=1cm, xshift = 2cm, yshift = -20pt](Psi){$\Psi$};

% Connections for Psi  
\node[connector, left = of Delta, xshift = 13](ConW){}; 
\node[connector, right = of Delta, xshift = -12](ConV){}; 

% Input and output of System 
\draw[<-] ($(G.north west)!0.66!(G.south west)$) -- +(-1.5cm, 0) node[left]{$d$};
\draw[->] ($(G.north east)!0.66!(G.south east)$) -- +(1.5cm, 0) node[right]{$e$};

% Connection to Delta 
\draw[->] ($(G.north east)!0.33!(G.south east)$) -- +(.5cm, 0) |-  (Delta.east) node[right, pos = 0.25]{$v$};
\draw[<-] ($(G.north west)!0.33!(G.south west)$) -- +(-.5cm, 0) |-  (Delta.west) node[left, pos = 0.25]{$w$};

% Connection to Psi
\draw[->] (ConW) |- ($(Psi.north west)!0.33!(Psi.south west)$); 
\draw[->] (ConV) |- ($(Psi.north west)!0.66!(Psi.south west)$); 
\draw[->] (Psi.east) -- +(1cm, 0) node[right]{$z$};

\draw[dotted, line width=1pt] (Psi) ++ (1.cm,0.6cm) |- ++ (-1.3cm, -1.8cm)node[left, yshift = 6pt, xshift = 2pt, pos=0.5]{$\tilde{G}$} |- ++(-3.5cm,-1.5cm) |- ++(4.8cm, 3.3cm);

\end{tikzpicture}  
		\caption{Extended state-space system $\tilde{G}$.}
		\label{fig:fbic_LTV_IC}
	\end{figure} 
	The dynamics of this interconnection is referred to as $\tilde{G}$ and defined by
	
	\begin{equation}
		\label{eq:Pext}
		\left[\begin{array}{c} \dot{\tilde{x}} \\\hline z\\ e \end{array}\right] = 
		\left[\begin{array}{c|cc} \tilde{A} & B_1 & B_2 \\\hline C_1  & D_{11} & D_{12} \\ 
			C_2 & D_{21} & D_{22} \end{array}\right]
		\left[\begin{array}{c} \tilde{x} \\\hline w \\ d \end{array}\right].
	\end{equation}
	
	In (\ref{eq:Pext}), $\tilde{x}(t)$ contains the states of $G$ and $\Psi$. By enforcing the time-domain inequality (\ref{eq:iqctd1}) on the filter output $z$, the explicit representation of the uncertainty $w = \Delta(v)$ is replaced. The finite horizon worst-case induced $L_2[0,T]$-gain is defined as 
	\begin{equation}
		\label{eq:E2PWC}
		\begin{gathered}
			\| \mathcal{F}_u(G,\Delta) \|_{2[0,T]} := \sup_{ \Delta \in \textrm{IQC}(\Psi ,M)} \sup_{\substack{d \in
					L_2[0,T]\\ d \neq 0, x(0) = 0}} \frac{ \norm{e}_{2[0,T]}}{\norm{d}_{2[0,T]}},
		\end{gathered}
	\end{equation}
	with, e.g., $\norm{d}_{2[0,T]} = [\int_{0}^{T}d(t)^Td(t)\mathrm{d}t]^{\frac{1}{2}}$.
	This gain can be interpreted as a bound on the energy amplification from the disturbance input to the performance output over the time horizon $[0,T]$ and all $\Delta \in IQC(\Psi, M)$.
	A dissipation inequality can be stated to upper bound the worst-case induced $L_2[0,T]$-gain of the interconnection $\mathcal{F}_u(G, \Delta)$.
	It is based on the extended LTV system $\tilde{G}$ given by (\ref{eq:Pext}) and the finite horizon time-domain IQC formulation in (\ref{eq:iqctd1}). %For a more thoroughly description, the reader is referred to the work of \cite{Seiler2019} and \cite{Biertuempfel2023}.
	This dissipation inequality is rearranged as an equivalent Riccati differential equation (RDE) formulation in the following Theorem \ref{thm:E2P}:% {\color{red}(FB: Write as LMI?)}
	\begin{thm}\label{thm:E2P}
		Let $\mathcal{F}_u(G, \Delta)$ be well-posed $\forall \, \Delta \in IQC(\Psi, M)$, 
		then $\|F_u(G,\Delta)\|_{2[0,T]} < \gamma$ if there exist a
		continuously differentiable symmetric $P: \R^+_0 \rightarrow \R^{n_x \times n_x}$ such that
		\begin{equation}
			\label{eq:Pcondition_L2}
			P(T) = 0,
		\end{equation}
		\begin{equation}
			\label{eq:classicRDE}
			\begin{split}
				\dot P = \hat{Q} + P\hat{A}+ \hat{A}^TP-P\hat{S}P \qquad \forall t \in [0,T]
			\end{split}
		\end{equation}
		and
		\begin{equation}
			\label{Rinv}
			\begin{split}
				\hat{R} = \bsmtx D_{11}^TMD_{11} + D_{21}^TD_{21} & D_{11}^TMD_{12} + D_{21}^TD_{22}\\
				D_{12}^TMD_{11} + D_{22}^TD_{21}   & D_{12}^TMD_{12} + D_{22}^TD_{22} - \gamma^2I_{n_d}\esmtx <0,
			\end{split}
		\end{equation}
		with
		\begin{equation}
			\label{eq:Atilde}
			\begin{split}
				\hat{A} = \bsmtx B_1 & B_2\esmtx \hat{R}^{-1} \bsmtx (C_1^TMD_{11} +C_2^TD_{21})^T \\ (C_1^TMD_{12} + C_2^TD_{22})^T \esmtx - \tilde{A},
			\end{split}
		\end{equation}
		\begin{equation}
			\label{eq:S}
			\begin{split}
				\hat{S} = -\bsmtx B_1 & B_2 \esmtx \hat{R}^{-1} \bsmtx B_1^T \\ B_2^T \esmtx,
			\end{split}
		\end{equation}
		and
		\begin{equation}
			\label{eq:Q}
			\begin{split}
				\hat{Q} =& - C_1^TMC_1 - C_2^TC_2 \\
				&+\bsmtx (C_1^TMD_{11} +C_2^TD_{21})^T \\ (C_1^TMD_{12} + C_2^TD_{22})^T \esmtx^T \hat{R}^{-1} 
				\bsmtx (C_1^TMD_{11} +C_2^TD_{21})^T \\ (C_1^TMD_{12} + C_2^TD_{22})^T\esmtx.
			\end{split}
		\end{equation}
	\end{thm}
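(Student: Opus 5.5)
The plan is a completion-of-squares argument on a quadratic storage function $V(t,\tilde x)=\tilde x^T P(t)\tilde x$ along trajectories of the extended system $\tilde G$ in (\ref{eq:Pext}). The goal is to show that the RDE (\ref{eq:classicRDE}) with $\hat R<0$ is exactly the statement that a certain Hamiltonian-type quadratic form in $(\tilde x,w,d)$ is maximized over $(w,d)$ with maximum value zero. From that, a dissipation inequality follows, and it integrates to the gain bound. One point to watch: the dimension of $P$ must be that of $\tilde x$, i.e. the states of $G$ and $\Psi$. I will read $n_x$ in the statement that way.

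\textbf{Step 1 (pointwise identity).} Fix $t$ and let $\eta=[w;d]$. Write $\mathcal{C}:=[C_1^TMD_{11}+C_2^TD_{21},\; C_1^TMD_{12}+C_2^TD_{22}]$ and $\mathcal{B}=[B_1\;B_2]$. I would expand
\begin{equation*}
H:=\dot V + z^TMz + e^Te - \gamma^2 d^Td,
\end{equation*}
with $\dot V=\tilde x^T\dot P\tilde x+2\tilde x^TP(\tilde A\tilde x+\mathcal B\eta)$. This gives a quadratic form in $(\tilde x,\eta)$. The $\eta\eta$ block equals $\hat R$. The cross term is $2\tilde x^T(P\mathcal B+\mathcal C)\eta$. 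The $\tilde x\tilde x$ block is $\dot P+P\tilde A+\tilde A^TP+C_1^TMC_1+C_2^TC_2$.

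Since $\hat R<0$, completing the square in $\eta$ gives
\begin{equation*}
H=(\eta-\eta^*)^T\hat R(\eta-\eta^*)+\tilde x^T\big[\dot P+P\tilde A+\tilde A^TP+C_1^TMC_1+C_2^TC_2-(P\mathcal B+\mathcal C)\hat R^{-1}(P\mathcal B+\mathcal C)^T\big]\tilde x .
\end{equation*}
The main bookkeeping obstacle is to check that the bracket vanishes exactly when (\ref{eq:classicRDE}) holds with $\hat A,\hat S,\hat Q$ as defined in (\ref{eq:Atilde})--(\ref{eq:Q}). To do this, I would expand $(P\mathcal B+\mathcal C)\hat R^{-1}(\cdot)^T$ into four pieces:
\begin{itemize}
\item the term $P\mathcal B\hat R^{-1}\mathcal B^TP$ equals $-P\hat SP$;
\item the two cross terms combine with $P\tilde A+\tilde A^TP$ to give $-(P\hat A+\hat A^TP)$;
\item the term $\mathcal C\hat R^{-1}\mathcal C^T$, together with $C_1^TMC_1+C_2^TC_2$, gives $-\hat Q$.
\end{itemize}
Signs need care here, and I would verify them explicitly. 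The result is that the bracket equals $\dot P-\hat Q-P\hat A-\hat A^TP+P\hat SP=0$. Hence $H\le 0$ for all $\tilde x,\eta$ and all $t\in[0,T]$.

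\textbf{Step 2 (integration).} Take any $\Delta\in IQC(\Psi,M)$ and any $d\in L_2[0,T]$, with $x(0)=0$ and the filter state of $\Psi$ zero initially, so $\tilde x(0)=0$. Well-posedness guarantees a unique solution with $w=\Delta(v)\in L_2$. Integrating $H\le0$ over $[0,T]$ and using $V(T)=0$ from (\ref{eq:Pcondition_L2}) and $V(0)=0$ gives
\begin{equation*}
\int_0^T z^TMz\,dt+\norm{e}_{2[0,T]}^2\le\gamma^2\norm{d}_{2[0,T]}^2 .
\end{equation*}
The IQC (\ref{eq:iqctd1}) makes the first term nonnegative, so $\norm{e}\le\gamma\norm d$.

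\textbf{Step 3 (strictness).} To get strict inequality uniformly over $d$ and $\Delta$, I would use that $\hat R<0$ is strict. One option is a perturbation argument: the RDE solution depends continuously on $\gamma$, so solvability persists for some $\gamma-\epsilon$. That requires the solution to remain bounded on $[0,T]$, which holds for small $\epsilon$ by continuity of ODE solutions on a compact interval. The alternative is to keep the term $(\eta-\eta^*)^T\hat R(\eta-\eta^*)$ and bound it. I would try the perturbation route first. Repeating Step 2 with $\gamma-\epsilon$ then gives $\norm{\mathcal F_u(G,\Delta)}_{2[0,T]}\le\gamma-\epsilon<\gamma$.
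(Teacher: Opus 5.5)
Your proposal is correct and follows the same dissipation-inequality route that the paper defers to \cite{Biertuempfel2018}. Completing the square shows that the RDE with $\hat R<0$ is exactly the pointwise identity $\dot V + z^TMz + e^Te - \gamma^2 d^Td = (\eta-\eta^*)^T\hat R(\eta-\eta^*) \le 0$ for $V=\tilde x^TP\tilde x$, your sign bookkeeping in Step~1 checks out, and integrating with $P(T)=0$, $\tilde x(0)=0$ and the IQC gives the bound. The one point to state explicitly in Step~3 is that the perturbation to $\gamma-\epsilon$ needs $\hat R$ to be uniformly negative definite on $[0,T]$ (automatic for continuous data on the compact horizon); this keeps $\hat R$ negative definite at $\gamma-\epsilon$ and keeps the perturbed RDE, started from $P(T)=0$, solvable on all of $[0,T]$.
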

	\begin{pf}
		The proof is provided in  \cite{Biertuempfel2018}.
	\end{pf}
	For a more thorough description, the reader is referred to the work of \cite{Seiler2019} and \cite{Biertuempfel2023}. Therein, efficient approaches for computing an upper bound on the worst-case gain are provided. \\
	\emph{Remark:} Theorem \ref{thm:E2P} is limited to the analysis of arbitrary $L_2[0,T]$ worst-case disturbances maximizing the induced $L_2[0,T]$-gain. Corollaries of this theorem can cover other quadratic performance metrics but are limited to induced input-output gains. It is also possible to include different performance metrics in one analysis using dedicated performance blocks, see, e.g., \cite{Fry2021}. However, these blocks are limited to specific input-output pairs, i.e., there is a one-to-one correspondence between a specific disturbance and performance output. Hence, mixed disturbance signals affecting one or more performance outputs cannot be covered. For example, assume that Theorem \ref{thm:E2P} is applied to a problem with two disturbances affecting a single output. One is well represented by an arbitrarily shaped norm-bounded signal. The other has a known maximum amplitude but otherwise arbitrary shape. However, the analysis will simply provide an upper bound for the worst-case combination of both signals. The maximum amplitude of the second disturbance cannot be enforced, which yields inaccurate results.

	%%%%%%%%%%%%%%%%%%%%
	%%% ------------- %%
	%%% -- Section -- %%
	%%% ------------- %%
	%%%%%%%%%%%%%%%%%%%%
	\section{Analysis of Mixed-Disturbances Using Signal IQCs}\label{sec:SigIQC} 
	\subsection{Analysis Condition}
	This section derives a novel formulation for robustness analysis that allows to account for mixed disturbances. In particular, the analysis considers arbitrary norm-bounded worst-case signals and signals with (partially) a priori known properties affecting the same performance output. The latter will be incorporated using \textit{signal IQCs} effectively decoupling them from the former.
	Consider a finite-horizon linear system with the external disturbances~$d$~and~$\delta$
	\begin{equation}
		\label{eqn:stateSpace_distSplit}
		\left[\begin{array}{c} \dot{x} \\\hline e \end{array}\right] = 
		\left[\begin{array}{c|cc} A  & B_{d} & B_{\delta} \\\hline
			C & D_d & D_\delta \end{array}\right]
		\left[\begin{array}{c} x \\\hline d \\ \delta \end{array}\right], 
	\end{equation}
	where $d \in L^{n_d}_{2}[0,T]$ are norm-bounded arbitrary disturbance signals and $\delta(t)\in \R^{n_\delta}$ are disturbances with partially known characteristics.
	The disturbance signals $\delta$ are multiplied by a constant driving term with value $1$. Extending the state vector with this driving term pushes the term $B_\delta \delta$ into the state matrix and $D_\delta \delta$ into the output matrix. The resulting state-space representation is:
	\begin{equation}\label{eq:Gext}
		\left[\begin{array}{c} \dot{x} \\ 0 \\\hline  e \end{array}\right] = 
		\left[\begin{array}{cc|c} A & B_{\delta} \delta  & B_{d} \\ 0  & 0 & 0 \\\hline  
			C & D_\delta \delta  & D_d \end{array}\right]
		\left[\begin{array}{c} x \\ 1 \\\hline  d \end{array}\right].
	\end{equation}
	This system is non-standard but identical to the system~\eqref{eqn:stateSpace_distSplit}.
	In the next step, the driving term is replaced with an artificial state $x_\delta :=1$ providing the augmented system:
	\begin{equation}\label{eq:Gpseudo}
		\left[\begin{array}{c} \dot{x} \\ 0 \\\hline  e \end{array}\right] = 
		\left[\begin{array}{cc|c} A & B_{\delta} \delta  & B_{d} \\ 0  & 0 & 0 \\\hline  
			C & D_\delta  & D_d\end{array}\right]
		\left[\begin{array}{c} x \\ x_\delta \\\hline  d \end{array}\right].
	\end{equation}
	As the signals $\delta$ have known properties, they can without loss of generality be confined to a set of expected disturbances $\mathcal{D}\subseteq \R^{n_\delta}$. By restricting the disturbance signals $\delta$ to this set, they can be treated equivalently as perturbations $\Delta_\text{S}:= \mathrm{diag}(\delta_1,\, \delta_2,\dots,\, \delta_{n_\delta})$. The external disturbances thus become internal signals described by ${\Delta_\text{S}}$ which are driven by a single artificial state. In other words, one artificial state is sufficient to persistently excite multiple internal signals. An uncertain finite-horizon system comparable to \eqref{eq:G} can be recovered which provides the system $H$: 
	\begin{equation}\label{eq:Gfinal}
		\begin{split}
			\left[\begin{array}{c} \dot{x} \\ 0 \\\hline v_\delta \\  e \end{array}\right] &= 
			\left[\begin{array}{cc|cc} A & 0 & B_\delta & B_{d} \\ 0 & 0 & 0 & 0 \\\hline 0 & 1_{n_\delta} & 0 & 0 \\ 
				C & 0 & D_\delta & D_d \end{array}\right]
			\left[\begin{array}{c} x \\ x_\delta \\\hline w_\delta \\ d \end{array}\right]\\
			w_\delta &= {\Delta_\text{S}}(v_\delta), 
		\end{split}
	\end{equation}
	with $v_\delta(t) \in \mathbb{R}^{n_\delta} \coloneqq x_\delta$, $w_\delta(t) \in \mathbb{R}^{n_\delta}$, and ${1}_{n_\delta}$ the 1-vector of size $n_\delta$. 
	The input-output behavior of these internal signals can be upper bounded with a suitable signal IQC, such that $\Delta_\text{S} \in IQC(\Psi, M)$. Extending the system~\eqref{eq:Gfinal} with the IQC filter $\Psi$ yields the state-space system required for an IQC-based analysis:
	\begin{equation}
		\label{eq:PextFinal}
		\left[\begin{array}{c} \dot{\bar{x}} \\\hline z\\ e \end{array}\right] = 
		\left[\begin{array}{c|cc} \mathcal{A} & \mathcal{B}_1 & \mathcal{B}_2 \\\hline \mathcal{C}_1  & \mathcal{D}_{11} & \mathcal{D}_{12} \\ 
			\mathcal{C}_2 & \mathcal{D}_{21} & \mathcal{D}_{22} \end{array}\right]
		\left[\begin{array}{c} \bar{x} \\\hline w_\delta \\ d \end{array}\right].
	\end{equation}
	This extended system is closely related to $\tilde{G}$ in~\eqref{eq:Pext}, but its extended state vector $\bar{x}$ also contains the artificial state $x_\delta := 1$, i.e., $\bar{x} = [\tilde{x}^T, x_\delta]^T$.
	Theorem \ref{thm:E2P} provides an analysis condition to bound the worst-case induced $L_2[0,T]$ gain in \eqref{eq:E2PWC} for zero initial conditions. Thus, it cannot be applied directly to this problem. Extensions of the (nominal) Bounded Real Lemma (BRL) to uncertain initial conditions as in \cite{Khargonekar1991} provide corresponding bounds for induced gains. These have numerators of the form $\sqrt{\norm{d}_{2[0,T]}^2 + x^T(0)Nx(0)}$. Here, the matrix $N~>~0~\in~\R^{n_x\times n_x}$ quantifies the \textit{relative importance} of the initial condition with respect to the external disturbances. Thus, $x(0)$ cannot be restricted to a specific set as required in~\eqref{eq:Gfinal}. 
	
	To derive a suitable analysis condition, we exploit the fact that for zero initial conditions any disturbance can be scaled such that $\norm{d}_{2[0,T]} = 1$. In this case, Theorem \ref{thm:E2P} provides an upper bound on the performance output $\norm{e}_{2[0,T]}\le \gamma$. The following theorem provides an analysis condition to calculate an upper bound on the worst-case $L_2[0,T]$-norm of the performance output $\norm{e}_{2[0,T]}$ which confines the artificial state $x_\delta$ to $\{ x_\delta\in \R : \abs{x_\delta}\le 1 \}$ and also enforces $\norm{d}_{2[0,T]}\le 1$.

	\begin{thm}\label{thm:UncInit} 
		Let $\mathcal{F}_u(H,\Delta_\text{S})$ be well-posed for all $\Delta_\text{S}\in IQC(\Psi, M)$. Then the worst case $L_2[0,T]$ norm of the performance output $e$ is bounded by the scalar $\gamma > 0$ for
		$\norm{d}_{2[0,T]}\le1$ and $\abs{x_\delta(0)}\le 1$ if there exist scalars $\alpha_1>0$ and $\alpha_2>0$, and a continuously differentiable symmetric matrix function $P(t) = \begin{bmatrix}P_{11} & P_{12} \\ P_{12}^T & P_{22} \end{bmatrix}$ such that 
		\begin{equation}\label{eq:LMI1}
			\begin{split}
				&\begin{bmatrix}
					\dot{P} + P\mathcal{A} + \mathcal{A}^TP & P\mathcal{B}_1 &  P\mathcal{B}_2 \\
					\mathcal{B}_1^TP & 0 & 0 \\ 
					\mathcal{B}_2^TP & 0 & -I
				\end{bmatrix} + \begin{bmatrix}
					\mathcal{C}_1^T \\ \mathcal{D}_{11} \\ \mathcal{D}_{12}^T
				\end{bmatrix} M \begin{bmatrix}
					\mathcal{C}_1^T \\ \mathcal{D}_{11}^T \\ \mathcal{D}_{12}^T
				\end{bmatrix}^T \\
				&+\alpha_2 \begin{bmatrix}
					\mathcal{C}_2^T \\ \mathcal{D}_{21} \\ \mathcal{D}_{22}^T
				\end{bmatrix}  \begin{bmatrix} \mathcal{C}_2^T \\ D_{21}^T \\ \mathcal{D}_{22}^T \end{bmatrix}^T 
				< 0 
			\end{split}
		\end{equation}
		and 
		\begin{equation}\label{eq:P22}
			\begin{bmatrix}
				P_{22}(0)-\alpha_1 & 0 & 0 \\ 
				0 & -P(T) & 0\\
				0 & 0 & \alpha_1 - \alpha_2 \gamma^2 + 1
			\end{bmatrix} \le 0 
		\end{equation}
	\end{thm}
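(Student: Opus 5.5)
The approach is a standard dissipation argument with the storage function $V(t,\bar{x}) = \bar{x}^T P(t) \bar{x}$. I assume the intended reading of \eqref{eq:LMI1} is the usual one, $\left[\mathcal{C}_1,\ \mathcal{D}_{11},\ \mathcal{D}_{12}\right]^T M \left[\mathcal{C}_1,\ \mathcal{D}_{11},\ \mathcal{D}_{12}\right]$, and similarly for the $\alpha_2$ term.

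First I take any admissible trajectory: $\Delta_\text{S}\in IQC(\Psi,M)$, $\norm{d}_{2[0,T]}\le 1$, and $\abs{x_\delta(0)}\le 1$. The filter states are zero at $t=0$ and the plant state $x(0)=0$, so only the $x_\delta$ component of $\bar{x}(0)$ is nonzero. Well-posedness guarantees that the signals $\bar{x}, w_\delta, d, z, e$ exist on $[0,T]$.

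Next I multiply \eqref{eq:LMI1} from the left and right by $[\bar{x}^T, w_\delta^T, d^T]$ and its transpose. Since the inequality is strict, the resulting quadratic form is nonpositive, which gives the pointwise inequality
\begin{equation*}
\dot V(t) + z^T M z + \alpha_2 e^T e - d^T d \le 0 ,
\end{equation*}
where $\dot V = \bar{x}^T\dot P \bar{x} + 2\bar{x}^T P(\mathcal{A}\bar{x}+\mathcal{B}_1 w_\delta+\mathcal{B}_2 d)$ along trajectories of \eqref{eq:PextFinal}.

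I then integrate over $[0,T]$ and use the IQC \eqref{eq:iqctd1} to drop the $z^TMz$ term, obtaining
\begin{equation*}
\bar{x}(T)^T P(T)\bar{x}(T) - \bar{x}(0)^T P(0)\bar{x}(0) + \alpha_2 \norm{e}_{2[0,T]}^2 \le \norm{d}_{2[0,T]}^2 \le 1 .
\end{equation*}

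The boundary terms are handled by \eqref{eq:P22}, read blockwise:
\begin{itemize}
\item The block $-P(T)\le 0$ gives $P(T)\ge 0$, so the terminal term is nonnegative and can be dropped.
\item Because $\bar{x}(0)$ has only the $x_\delta$ component, the initial term is $P_{22}(0)x_\delta(0)^2$. With $P_{22}(0)\le\alpha_1$ and $\abs{x_\delta(0)}\le 1$, it is bounded by $\alpha_1$. This step needs $\alpha_1\ge 0$, which holds since $\alpha_1>0$.
\end{itemize}

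Combining these, $\alpha_2\norm{e}_{2[0,T]}^2 \le 1+\alpha_1$. The last block of \eqref{eq:P22} states $1+\alpha_1 \le \alpha_2\gamma^2$. Dividing by $\alpha_2>0$ then yields $\norm{e}_{2[0,T]}\le\gamma$. This is essentially an S-procedure combining the two constraints $\norm{d}\le 1$ and $\abs{x_\delta(0)}\le1$, with multipliers $1$ and $\alpha_1$.

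The main obstacle is the bookkeeping of the initial condition. I must argue that the artificial state really enters only through the $P_{22}$ block: filter initial states must be zero for the time-domain IQC to hold, and $x(0)=0$. I must also note that $x_\delta$ is constant ($\dot x_\delta=0$), so its value is fixed by $x_\delta(0)$ and no further constraint is needed along the trajectory.

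A second subtle point is that the IQC bound must be used for the artificial input $v_\delta = x_\delta\,1_{n_\delta}$. This is consistent with the definition of $\Delta_\text{S}\in IQC(\Psi,M)$, since $v_\delta$ lies in $L_2[0,T]$ on a finite horizon.
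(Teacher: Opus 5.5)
Your proposal is correct and follows essentially the same dissipation argument as the paper. It uses the storage function $\bar{x}^T P(t)\bar{x}$, integrates the pointwise inequality obtained from \eqref{eq:LMI1}, drops the $z^TMz$ term via the IQC, and absorbs the boundary terms with \eqref{eq:P22}. The differences are cosmetic: the paper multiplies \eqref{eq:P22} by $[x_\delta(0),\ \bar{x}(T)^T,\ 1]$ and adds it to the integrated inequality, which is equivalent to your blockwise reading. It also carries a $(1-\epsilon)$ perturbation of the $-I$ block, which you rightly omit since only the non-strict bound $\norm{e}_{2[0,T]}\le\gamma$ is claimed.
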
 
	\begin{pf}
		The proof relies on the definition of a time-varying quadratic storage function~$V(\bar{x},t) = \bar{x}(t)^T P(t)\bar{x}(t)$ and uses a standard dissipation argument. Perturbing the left-hand-side of~\eqref{eq:LMI1} with $(1-\epsilon)$ with $0<\epsilon \ll 1$, multiplying the left and right side with~$[\bar{x}^T, w_\delta^T, d^T]$ and~$[\bar{x}^T, w_\delta^T, d^T]^T$, respectively yields
		\begin{equation}\label{eq:pf1}
			\begin{split}
				\dot{V}(\bar{x},t) + z^TMz
				+ \alpha_2 e^Te- (1-\epsilon)d^Td \le 0.
			\end{split}
		\end{equation}
		Integrating this dissipation inequality from $0$ to $T$ and recalling that only $x_\delta(0) \neq 0$ provides
		\begin{equation}\label{eq:pf1}
			\begin{split}
				\bar{x}(T)^TP(T)\bar{x}(T) - x_\delta(0)^2P_{22}(0) + \int_0^Tz(t)^TMz(t) dt& \\
				+ \alpha_2 \norm{e}^2_{2[0,T]} - (1-\epsilon)\norm{d}^2_{2[0,T]} \le 0.&
			\end{split}
		\end{equation}
		Multiplying the left-hand-side of~\eqref{eq:P22} with $[x_\delta(0), \bar{x}(T)^T, 1]$ and $[x_\delta(0), \bar{x}(T)^T, 1]^T$, respectively results in
		\begin{equation}\label{eq:pf2}
			\begin{split}
				x_\delta(0)^2P_{22}(0) - \alpha_1 x_\delta(0)^2 -\bar{x}(T)^TP(T)\bar{x}(T)&\\
				+ \alpha_1 -\alpha_2 \gamma^2 +1 \le 0.&
			\end{split}
		\end{equation}
		As $\Delta_\text{S} \in IQC(\Psi,M)$, it follows that $\int_0^Tz(t)^TMz(t) dt \ge 0$ and~\eqref{eq:pf1} can be substituted into~\eqref{eq:pf2} which yields the inequality
		\begin{equation}\label{eq:pf3}
			1-(1-\epsilon)\norm{d}^2_{2[0,T]}+\alpha_2(\norm{e}^2_{2[0,T]} -\gamma^2) +\alpha_1(1-x_\delta(0)^2) \le 0.
		\end{equation}
		From the definitions $\norm{d}_{2[0,T]}\le 1$ and $\abs{x_\delta} \le 1$, it can be concluded that $\norm{e}_{2[0,T]}\le \gamma$.$\qed$
	\end{pf}
	Note that in the analysis only the artificial state $x_\delta$ has non-zero initial conditions. Accordingly, Theorem~\ref{thm:UncInit} partitions $P$ 
	into a $n_{\tilde{x}} \times n_{\tilde{x}}$ matrix $P_{11}$ (related to the state $\tilde{x}$) and a scalar $P_{22}$ (related to the artificial state $x_\delta$). The upper left entry in~\eqref{eq:P22} constrains the initial value $x_\delta(0)$.

	\subsection{Computational approach}
	
	A few steps are necessary to convert Theorem~\ref{thm:UncInit} into a computationally tractable problem.
	First, an infinite number of feasible IQCs exists to bound the behavior of a given signal. The most common
	solution to this problem found in literature, e.g., \cite{Veenman2016} or \cite{Pfifer2016}, uses a fixed IQC filter $\Psi$ and parameterizes $M$. In other words, $M$ is restricted to a feasible set $\mathcal{M}$ such that $\Delta_\text{S}\in IQC(\Psi, M)$.
	In \cite{Megretski1997} and \cite{Veenman2016} extensive catalogs of
	IQCs suitable to describe different types of signals are provided. Two examples for useful
	IQC parameterizations for signals are given below. The first example concerns a constant disturbance whose absolute value is known to be bounded.
	\begin{exmp}[Constant Disturbance]\label{ex:delta}
		Let $\Delta_\text{S} = \delta$ be a constant signal $\delta$, with $|\delta(t)| \le b \in \R$. A valid time-domain IQC for $\Delta_\text{S}$
		is defined by $\Psi = \text{diag}(\psi_\nu, \psi_\nu)$ and $\mathcal{M} := \{M = \bsmtx b^2X & Y \\ Y^T & -X\esmtx : X = X^T > 0 \in \R^{(\nu+1)\times (\nu+1)}, Y = -Y^T \in \R^{(\nu+1) \times (\nu+1)}\}$.
		A typical choice for $\psi_\nu \in \RH^{(\nu+1) \times 1}$ is
		$\psi_\nu = \bmtx 1 & \frac{s+\rho}{s-\rho}& \dots & \frac{(s+\rho)^\nu}{{(s-\rho)}^\nu} \emtx^T\, , \, \rho < 0\, , \, \nu \in \mathbb{N}_0$.
	\end{exmp}
%	\begin{exmp}[Constant Disturbance]\label{ex:delta}
%		Let $\Delta_\text{S} = \delta$ be a constant signal $\delta$, with $|\delta(t)| \le b \in \R$. A valid time-domain IQC for $\Delta_\text{S}$
%		is defined by $\Psi = \text{diag}(\psi_\nu, \psi_\nu)$ and $\mathcal{M}$ defined as the set of all matrices $M = \bsmtx b^2X & Y \\ Y^T & -X\esmtx$ such that $X\!=\!X^T\!>\!0 \in \R^{(\nu+1)\times (\nu+1)}$ and $Y\!=\!-Y^T \in \R^{(\nu+1) \times (\nu+1)}$ with $\nu \in \mathbb{N}_0$.
%		A typical choice for $\psi_\nu \in \RH^{(\nu+1) \times 1}$ is	$\psi_\nu = \bmtx 1 & \frac{s+\rho}{s-\rho}& \dots & \frac{(s+\rho)^\nu}{{(s-\rho)}^\nu} \emtx^T$ with $\rho < 0$.
%	\end{exmp}
%	
	This example may appear trivial, but constant external disturbances, e.g., biases, cannot be covered in
	standard BRL-based worst-case analyses. 
	The next example describes an arbitrarily time-varying unknown disturbance signal with bounded maximum amplitude. 
	\begin{exmp}[Time-Varying Disturbance]\label{exmp:tv}
		Let $\Delta_\text{S} = \delta$ be an arbitrarily time-varying signal, with $|\delta(t)| \le b \in \R\, \forall\, t\in [0,T]$. A valid time-domain IQC for $\Delta_\text{S}$ is defined by $\Psi = \bsmtx 1 & 0 \\ 0 & 1 \esmtx$ and $\mathcal{M} := \{ M = \bsmtx b^2X & 0 \\ 0 & -X\esmtx : X > 0 \in \R\}$.
	\end{exmp}

	%\begin{figure}
	%	\centering \input{figures/tikzBlk/Blk_AdvancedAnalysis.tikz} 
	%	\caption{Analysis Problem setup for mixed disturbances.}
	%	\label{blk:NewAnalysis}
	%\end{figure}
	%\begin{figure}
	%	\centering \input{figures/tikzBlk/Blk_AdvancedAnalysis_V2.tikz} 
	%	\caption{Analysis Problem setup for mixed disturbances.}
	%	\label{blk:NewAnalysisV2}
	%\end{figure}
	\begin{figure}
		\centering \begin{tikzpicture}[blockdiag, auto]

% Blocks 
\node[block](Plant){$H$}; 
\node[block, above = of Plant, yshift = -10pt](Delta){$\begin{bmatrix}\Delta_\text{S} & \phantom{0} \\  \phantom{0} & \Delta \end{bmatrix}$}; 

% Input to Plant 
\draw[<-] ($(Plant.north west)!0.75!(Plant.south west)$) -- +(-1.3cm, 0)node[left]{$d$};
\draw[<-] ($(Plant.north west)!0.25!(Plant.south west)$) -- +(-0.75cm, 0) |- ($(Delta.north west)!0.66!(Delta.south west)$)node[right, pos = 0.25]{$w$} ; 
\draw[<-] ($(Plant.north west)!0.5!(Plant.south west)$) -- +(-1cm, 0) |- ($(Delta.north west)!0.33!(Delta.south west)$)node[left, pos = 0.25]{$w_\delta$} ; 

% Output from Plant 
\draw[->] ($(Plant.north east)!0.66!(Plant.south east)$) -- +(1.1cm, 0)node[right]{$e$};

% Input to Delta
\draw[<-] ($(Delta.north east)!0.33!(Delta.south east)$) -- +(0.8cm, 0)node[right]{$x_\delta \coloneqq v_\delta$};
\draw[->] ($(Plant.north east)!0.33!(Plant.south east)$) -- +(0.6cm, 0) |- ($(Delta.north east)!0.66!(Delta.south east)$)node[right, pos = 0.25]{$v$};

%%%% old 

%% Blocks
%\node[block, minimum width=1.2cm] (G) {$G$};
%\node[block, above=of G, minimum width=1.2cm, dashed, yshift = -10pt](Delta) {$\Delta$};
%\node[block, above= of Delta, minimum width=1cm, xshift = 2cm, yshift = -20pt](Psi){$\Psi$};
%
%% Connections for Psi  
%\node[connector, left = of Delta, xshift = 13](ConW){}; 
%\node[connector, right = of Delta, xshift = -12](ConV){}; 
%
%% Input and output of System 
%\draw[<-] ($(G.north west)!0.66!(G.south west)$) -- +(-1.5cm, 0) node[left]{$d$};
%\draw[->] ($(G.north east)!0.66!(G.south east)$) -- +(1.5cm, 0) node[right]{$e$};
%
%% Connection to Delta 
%\draw[->] ($(G.north east)!0.33!(G.south east)$) -- +(.5cm, 0) |-  (Delta.east) node[right, pos = 0.25]{$v$};
%\draw[<-] ($(G.north west)!0.33!(G.south west)$) -- +(-.5cm, 0) |-  (Delta.west) node[left, pos = 0.25]{$w$};
%
%% Connection to Psi
%\draw[->] (ConW) |- ($(Psi.north west)!0.33!(Psi.south west)$); 
%\draw[->] (ConV) |- ($(Psi.north west)!0.66!(Psi.south west)$); 
%\draw[->] (Psi.east) -- +(1cm, 0) node[right]{$z$};
%
%\draw[dotted, line width=1pt] (Psi) ++ (1.cm,0.6cm) |- ++ (-1.3cm, -1.8cm)node[left, yshift = 5pt, pos=0.5]{$H$} |- ++(-3.5cm,-1.5cm) |- ++(4.8cm, 3.3cm);

\end{tikzpicture}  
		\caption{Analysis setup concept with system uncertainties}
		\label{blk:NewAnalysisV3}
	\end{figure}

	After choosing a suitable IQC parmeterization, the remainder of the theorem is addressed.
	The LMI (\ref{eq:LMI1}) has to hold for all $t\in [0,T]$, which poses an infinite number of constraints.
	The most common approach to render the analysis condition computationally feasible is to enforce the inequality condition only on a finite set of $i$ grid points $t_i \in [0, T]$, see, e.g., \cite{Pfifer2016}.
	The decision variables in conditions (\ref{eq:LMI1}) and (\ref{eq:P22}) are the matrix function $P$, the IQC parameterization $M\in \mathcal{M}$, and the positive scalars $\alpha_1$ and $\alpha_2$. The time-varying function $P$  must also be constrained to a finite dimensional subspace to pose a computationally tractable problem.
	Most commonly, these are expressed as linear combinations, e.g.  $P(t)=\sum_{i=0}^{N_\text{b}}t^i P_i$, $i = 0,1, ..., N_\text{b}$, where $N_\text{b}$ denotes the order of the basis function. Thus, the coefficients $P_i$ become the decision variables. More sophisticated basis functions like cubic splines can also be employed, as in \cite{Seiler2019}. 
	Feasible coefficients can then be calculated using a semidefinite program with the constraints (\ref{eq:LMI1}) and (\ref{eq:P22}) for a given $\gamma$. Bisecting over $\gamma$ calculates the minimal feasible upper bound on $\norm{e}_{2[0,T]}$.
	
	Additional uncertainties $\Delta$ in the system dynamics can be readily included by following the explanations in Section~\ref{sec:Back}. Doing so results in an extended uncertainty block $\bold{\Delta} := \text{diag}(\Delta_\text{S}, \Delta)$ in Theorem~\ref{thm:UncInit}. Figure~\ref{blk:NewAnalysisV3} provides an graphical interpretation of the corresponding analysis setup for the system $H$ in (\ref{eq:Gfinal}) including an additional uncertainty $w = \Delta(v)$ analog to (\ref{eq:G}).

	%\begin{exmp}\label% this is a full block LTI uncertainty
	%will be changed....Let $\Delta$ be a full-block LTI dynamic uncertainty, with $\Delta \in \RH$ and $\norm{\Delta}_{\infty} \le b \in \R$. A valid time-domain IQC for $\Delta$
	%is defined by $\Psi = \bsmtx b\psi_\nu \otimes I_{n_v} & 0 \\ 0 & \psi_\nu \otimes I_{n_v} \esmtx$ and $\mathcal{M} := \{ M = \bsmtx X\otimes I_{n_v} & 0 \\ 0 & -X\otimes I_{n_v}\esmtx : X = X^T > 0 \in \R^{(\nu+1)\times (\nu+1)}\}$.
	%A typical choice for $\psi_\nu \in \RH^{(\nu+1) \times 1}$ is again (\ref{eq:psi_nu}).
	%\end{exmp}
	
	%\textit{Remark:\,}Considering additional uncertainties $\Delta$ in the system dynamics is straight forward and follows from the explanations in section~\ref{sec:Back}. Doing so results in an extended uncertainty block $\bold{\Delta} := \text{diag}(\tilde{\Delta}, \Delta)$.

	%%%%%%%%%%%%%%%%%%%%%%%%%%%%%%%
	%% ------------------------- %%
	%% -- Example - UAV-Path  -- %%
	%% ------------------------- %%
	%%%%%%%%%%%%%%%%%%%%%%%%%%%%%%%
	
	\section{Numerical Example: UAV Path-Following Analysis} \label{sec:Exmp}
	
	This section presents a robustness analysis of a load factor tracker of an UAV under wind turbulence and system uncertainties. The controller was developed as part of the integrated guidance and robust control architecture for following energy optimal flight paths in urban environments (\cite{Bertran2025}). Here, robustness is crucial, as the trajectory is in the proximity of buildings to maximize the energy savings. %The considered UAV is the \emph{Urban Condor}, an electric Sig Kadet LT-40 modified for larger payloads by the Chair of Flight Mechanics and Control at TU Dresden. % A aerodynamic derivatives describing the UAV were identified based on wind tunnel tests conducted at the Chair and presented in the work of \cite{Wisbacher2025}.
	
	%\textit{UrbanCondor} following a predefined trajectory in an urban environment. The \textit{UrbanCondor} is an electric Sig Kadet LT-40 modified by the Chair of Flight Mechanics and Control at TU Dresden. The controller was developed as part of the integrated guidance and robust control architecture in \cite{Bertran2025} to follow energy optimal flight paths provided by \cite{Rienecker2024}. For this application the robustness is crucial, as the trajectory passes in close proximity to buildings to maximize the energy savings. 
	
	%This section presents the worst-case performance analysis of the UAV \textit{UrbanCondor} following a predefined trajectory in an urban environment. The \textit{UrbanCondor} is an electric Sig Kadet LT-40 modified by Chair of Flight Mechanics and Control at TU Dresden for operations in urban environments. The flight path is an energy optimal trajectory calculated in \cite{Rienecker2024}. The guidance and control algorithm is described in \cite{Bertran2025}. 
	%The goal of the analysis here is to assess the degradation of the tracking performance under wind disturbance and uncertainty in the aircraft dynamics. To keep the example concise, only the inner control loop tracking the load factor $n_z$ in the longitudinal dynamics is considered.
	
	\newcommand{\UAV}{\mathrm{U\!AV}}
	\newcommand{\wind}{w}%\mathrm{w}}
\newcommand{\cmd}{\mathrm{cmd}}
\newcommand{\act}{\mathrm{Act}}
\subsection{Augmented Aircraft Model}

The nonlinear longitudinal dynamics of the UAV are linearized for a velocity of $17$\,m/s and an altitude of $20$\,m. The resulting state-space representation is defined by the state vector $x~=~[U\;W\;\theta\;q]^T$, with the pitch angle $\theta$, pitch rate $q$ and the velocities $U$ and $W$ in the aircraft's body-fixed frame pointing in the direction of the nose and downwards, respectively. The system's inputs are the horizontal wind component $\delta_\wind$ and the elevator deflection $\delta_\text{e}$. The expected wind disturbance is based on measurements taken in the city of Dresden, Germany. Specifically, it corresponds to the average of the mean hourly wind speeds during the year's windiest day. The maximum amplitude of the wind disturbance is bounded by $|\delta_\wind|\leq b_{\delta_\wind}=5$\,m/s. The wind disturbance is included in the robustness analysis following the steps in Section~\ref{sec:SigIQC} which leads to an augmented state-space system with the artificial state $x_{\delta_\wind}$ acting as a driving term. The augmented system $G_{\UAV}$ is:
\begin{small}
	\begin{equation}\label{eq:PitchExt}
		\begin{split}
			\begin{bmatrix} \dot{U}\\  \dot{W} \\ \dot{\theta} \\ \dot{q}\\ 0 \end{bmatrix}\!\!\! &= \!\!\!
			\begin{bmatrix}
				-0.19 &  0.32  &-9.8 & 1.02  & 0.14\,\delta_\wind \\
				-1.58 & -6.32  & 0.67& 15.1  & 0.63\,\delta_\wind\\  
				0 & 0    & 0     & 1     & 0 \\
				-0.25     & -3.75& 0 & -12.5 & -0.01\,\delta_\wind \\ 
				0 & 0    & 0     & 0     & 0
			\end{bmatrix} \!\!\!
			\begin{bmatrix} U \\ W \\  \theta \\ q \\ x_{\delta_\wind}	\end{bmatrix}\!\!\! + \!\!\!
			\begin{bmatrix} -0.013 \\ 0  \\ -0.19 \\ -0.98 \\ 0 \end{bmatrix}\!\! \delta_e \\ 
			n_z \!\! &= \!\!
			\begin{bmatrix} 0.16 & 0.64 & 0 & 0.19 & -0.06\,\delta_\wind	\end{bmatrix}
			\begin{bmatrix} U \\ W \\ \theta \\ q \\ x_{\delta_\wind}	\end{bmatrix} + 0.0194\, \delta_e, 
		\end{split}
	\end{equation}
\end{small}
%Structure from matlab
%\begin{small}
%	\begin{equation}\label{eq:PitchExt}
	%		\begin{split}
		%			\begin{bmatrix} \dot{U}\\  \dot{\theta} \\ \dot{W} \\ \dot{q}\\ 0 \end{bmatrix}\!\!\! &= \!\!\!
		%			\begin{bmatrix}
			%				-0.19 & -9.8 & 0.32  & 1.02  & 0.14\,\delta_\wind \\ 
			%				0 & 0    & 0     & 1     & 0 \\ 
			%				-1.58 & 0.67 & -6.32 & 15.1  & 0.63\,\delta_\wind\\ 
			%				-0.25 & 0    & -3.75 & -12.5 & -0.01\,\delta_\wind \\ 
			%				0 & 0    & 0     & 0     & 0
			%			\end{bmatrix} \!\!\!
		%			\begin{bmatrix} U \\  \theta \\ W \\ q \\ x_{\delta_\wind}	\end{bmatrix}\!\!\! + \!\!\!
		%			\begin{bmatrix} -0.013 \\ 0  \\ -0.19 \\ -0.98 \\ 0 \end{bmatrix}\!\! \delta_e \\ 
		%			\begin{bmatrix} n_z \end{bmatrix}\!\! &= \!\!
		%			\begin{bmatrix} 0.16 & 0 & 0.64 & 0.19 & -0.06\,\delta_\wind	\end{bmatrix}
		%			\begin{bmatrix}U \\  \theta \\ W \\ q \\ x_{\delta_\wind}	\end{bmatrix} + 0.0194\, \delta_e, 
		%		\end{split}
	%	\end{equation}
%\end{small}
with $n_z$ denoting the load factor. The control objective is to track the reference load factor $n_{z,\cmd}$. 
The reference value is provided by an outer loop flight path and airspeed controller. The command can be reasonably assumed as an arbitrary norm-bounded signal, i.e., $n_{z,\cmd}\in L_2$; a common assumption to assess tracking performance. It is filtered through a low-pass filter $G_\mathrm{F}$ with a roll-off at $7$ rad/s to account for the control bandwidth of the outer loop controller. %For more information on the control architecture and synthesis procedure, the interested reader is referred to \cite{Bertran2025}. 
The load factor controller is
\begin{equation}
	K_{n_z} = \frac{-726.7s^2 - 1.308\cdot10^4s - 5.831\cdot10^4}{s^3 + 57.4s^2 + 810s}. 
\end{equation}
The elevator actuator is modeled by a first order lag $G_{\act}$ with a time constant $T_{\act} = 0.02$ s. Uncertainties in the system dynamics are modeled as a multiplicative dynamic uncertainty $\Delta$ at the actuator input. The uncertainty $\Delta$ is a SISO LTI system with $\norm{\Delta}_\infty\le b_\Delta$. For the presented example the uncertainty level is set to $60\%$, i.e., $b_\Delta = 0.6$. The uncertain closed loop interconnection of the analysis problem is shown in Fig.~{\ref{fig:BlkExt}}, with the performance output  $e = n_{z,\cmd} - n_z$.

%is designed for an approximation of the aircraft's short period using standard loop-shaping methods and consists of an proportional gain, integral boost, high frequency roll-off and a load component. It's transfer function is 

%The control error $e = n_{z,cmd} - n_z$ is the performance output of the analysis problem, where $n_{z,cmd}$ is the reference. For this example, the command is considered as an arbitrary signal in $L_2$, and therefore filtered by a low-pass with a roll-off at $7$~rad/s to represent the control bandwidth of the outer-loop controller. The elevator actuator dynamic is modeled by a first order lag $G_{Act}$ with a time constant $T_{act} = 0.02s$ and corrupted by a multiplicative dynamic uncertainty $\Delta$ at the actuator input. The uncertainty $\Delta$ is a SISO LTI system with $\norm{\Delta}_\infty\le b_\Delta$. For the presented example the uncertainty level is set to $60\%$, i.e., $b_\Delta = 0.6$. The uncertain closed-loop interconnection of the analysis problem is shown in Fig.~{\ref{fig:BlkExt}}.

\begin{figure}[h!]
	\centering \tikzstyle{blockdiag}	= [node distance=5mm, >=stealth', semithick]
\tikzstyle{block}		= [draw, rectangle, minimum width=0.8cm, minimum height=.7cm, align=center]

\begin{tikzpicture}[blockdiag]
	
% Main Block Order (From AC to the left)  
\node[block](AC){$G_{\UAV}$}; 									
\node[block, left = of AC, xshift = -2pt](Act){$G_{\act}$}; 
\node[sum, left = of Act, xshift = 5pt](SumUnc){};
\node[connector, left = of SumUnc](RefUnc){};
\node[connector2, left = of RefUnc, xshift = -3pt](ConUnc){}; ; 
\node[block, left = of ConUnc, xshift = 5pt](Ctrl){$K_{n_z}$};
\node[connector2, left = of Ctrl, xshift = 5pt](OutE){};
\node[sum, left = of OutE, xshift = 7pt](SumE){}; 
\node[block, left = of SumE, xshift =15pt, yshift = 17pt](LowPass){$G_\mathrm{F}$}; 

% Actuator to Aircraft
\draw[->]($(Act.north east)!0.66!(Act.south east)$) -- ($(AC.north west)!0.66!(AC.south west)$)node[below, pos = 0.5]{$\delta_\text{e}$}; 

% delta_e flow 
\draw[->](Ctrl.east) --(SumUnc.west);
\draw[->](SumUnc.east) -- (Act.west); 

% theta flow
\draw[->](AC.east) -| +(0.8cm, -0.7cm)node[left, pos=0.75]{$n_z$} -| (SumE.south)node[right, pos =0.95]{-};
\draw[->](SumE.east) -- (Ctrl.west); 
\draw[->](OutE)-- +(0cm,0.8cm)node[above, pos=1]{$e$};
\draw[->](LowPass.south) |- (SumE); 
\draw[<-](LowPass.north) -- +(0,0.35cm)node[above, pos = 1]{$n_{z,\cmd}$}; 

% uncertainty block 
\node[block, above = of RefUnc, yshift = -5pt](Delta){$\Delta$}; 
\draw[->](ConUnc.north) |- (Delta.west);
\draw[->](Delta.east) -| (SumUnc.north); 

% wind uncertainty block 
\node[block, above = of AC, yshift = -5pt](Wind){$\Delta_\wind$};
\draw[->] ($(AC.north east)!0.25!(AC.south east)$) -- +(0.3cm,0) |- (Wind.east)node[right, pos=0.2]{$x_{\delta_\wind}$};
\draw[<-] ($(AC.north west)!0.33!(AC.south west)$) -- +(-0.3cm,0) |- (Wind.west);  
%\draw[<-]($(AC.north west)!0.33!(AC.south west)$) -| +(-0.3cm,0.5cm)node[above, pos=1]{$\delta_\text{w}$}; 

\end{tikzpicture}  
	\caption{Analysis setup for load factor tracker.}
	\label{fig:BlkExt}
\end{figure}

%The considered wind disturbances resembles an average day in Dresden, Germany based on meteorologic measurements. The maximum gust amplitude of the turbulent flow is {\color{red}xxx}, thus the wind can be described by an artifically fast time-varying disturbance signals as provided in Example~\ref{exmp:tv}, i.e. $\delta_\text{w} \in IQC(\Psi_{\delta}, M_\delta)$, with the factorization $\Psi_\delta = \bsmtx 1 & 0 \\ 0 & 1 \esmtx$ and parameterization $M_\delta= \bsmtx b_{\delta_\text{w}} X & 0\\ 0 & -X \esmtx$. 

\subsection{Robustness Analysis}

The analysis horizon is set to $T=3$ s. This is motivated by the fact that load factor variations mostly occur during turn maneuvers. In an urban scenario, these are usually completed within a few seconds. 
%The finite horizon induced $L_2$ norm from input to performance output is considered. 
To apply the analysis condition in Theorem \ref{thm:UncInit} and calculate the bound on the worst case $\norm{e}_{2[0,T]}$, the interconnection in Fig.~\ref{fig:BlkExt} must be transferred into the IQC framework as detailed in Section~\ref{sec:Back}. The model uncertainty $\Delta$ is represented by the an IQC resembling a SISO LTI dynamic uncertainty, see, e.g., \cite{Veenman2016}. This means, $\Delta \in IQC(\Psi, M)$ with $M= \bsmtx b_\Delta X & 0\\ 0 & -X \esmtx$ and $\Psi= \bsmtx \psi_\nu & 0 \\ 0 & \psi_\nu \esmtx$. The basis function is chosen as $\psi_\nu =\bsmtx 1 & \frac{1}{s+1} \esmtx^T$.\\
%The commanded signal $n_{z,cmd}$ is well represented by an arbitrary norm bounded worst-case signal analog to the signal $d$ in Sections~\ref{sec:Back} and \ref{sec:SigIQC}. 
The turbulent wind $\delta_w$ is approximated by an arbitrarily fast time-varying disturbance signal which can be modeled with the IQC given in Example~\ref{exmp:tv}, i.e. $\Delta_\wind \in IQC(\Psi_{\wind}, M_\wind)$, with the factorization $\Psi_\wind = \bsmtx 1 & 0 \\ 0 & 1 \esmtx$ and parameterization $M_\wind= \bsmtx b_{\delta_\wind} X & 0\\ 0 & -X \esmtx$.
A third-order polynomial basis functions represents $P$. The conditions of Theorem~\ref{thm:UncInit} are evaluated for a finite number of LMIs along an equidistant time grid with a step size of $0.1\,\text{s}$. The resulting semi-definite program is solved using Matlab's \texttt{lmilab}. A bisection over $\gamma$ yields the performance gain $\gamma = 4.98$. 

\begin{figure}[ht!]
	\centering % This file was created by matlab2tikz.
%
%The latest updates can be retrieved from
%  http://www.mathworks.com/matlabcentral/fileexchange/22022-matlab2tikz-matlab2tikz
%where you can also make suggestions and rate matlab2tikz.
%
\definecolor{mycolor1}{rgb}{0.00000,0.44700,0.74100}%
\begin{tikzpicture}

\begin{axis}[%
width=0.75\columnwidth,
height=3cm,
at={(0,0)},
scale only axis,
xmin=0,
xmax=0.8,
ymin=0,
ymax=40,
ytick = {0,10,20, 30, 40},
axis background/.style={fill=white},
xlabel = {Actuator Uncertainty Level $b_\Delta$},
ylabel style={yshift = -10pt},
ylabel = {Performance Value $\gamma$},
xmajorgrids,
ymajorgrids, 
]
\addplot [color=mycolor1, forget plot, line width=2]
  table[row sep=crcr]{%
		0.001	1.16779932879094\\
		0.05	1.24631378179468\\
		0.1	1.33721364900412\\
		0.15	1.44311721223326\\
		0.2	1.56783179910707\\
		0.25	1.71664050907598\\
		0.3	1.89551285585228\\
		0.35	2.11544740641048\\
		0.4	2.39208003584021\\
		0.45	2.75089198259889\\
		0.5	3.23741899300301\\
		0.55	3.92330274963042\\
		0.6	4.97567947635602\\
		0.61	5.25895010178035\\
		0.62	5.57753146096559\\
		0.63	5.93859219151951\\
		0.64	6.34990953428662\\
		0.65	6.82756190607423\\
		0.66	7.38580881779257\\
		0.67	8.05057187636304\\
		0.68	8.86038772959242\\
		0.69	9.86666484730352\\
		0.7	11.1721482051917\\
		0.71	12.9496753928469\\
		0.72	15.5221222863622\\
		0.73	19.6558525641567\\
		0.74	27.4662435031542\\
		0.75	47.7777651687658\\
		0.76	100\\
};
\end{axis}
\end{tikzpicture}%
	\caption{Robust performance for different actuator uncertainty levels.}
	\label{Fig:GammaOverUnc}
\end{figure}	

The performance analysis was repeated for a total of $30$ different uncertainty levels distributed between $0$ and $80\%$, i.e., $b_\Delta \in [0\; 0.8]$. The wind disturbance remains bounded by the previously defined signal IQC $\Delta_\wind$ with $b_{\delta_\wind} = 5.5$ m/s. The results are depicted in Fig.~\ref{Fig:GammaOverUnc} and show the expected exponential increase in the performance value~$\gamma$, thus verifying that traditional uncertainty descriptions are captured correctly within the novel approach.  

\section{Conclusion}

This paper contributes a novel worst-case performance analysis that incorporates the specific characteristics of partially known disturbances while simultaneously considering worst-case norm-bounded inputs. The partially known disturbances are accounted for through variations of an internal signal of the system which is bounded by an appropriate signal integral quadratic constraints. This approach removes restrictions imposed by frameworks relying on induced norms for the analysis of systems under mixed disturbances, such as one-to-one correspondence between specific performance inputs and outputs. The corresponding analysis condition provides an upper bound on the $L_2[0,T]$ norm of the performance output, which establishes a relationshipt to classical induced norm approaches. The approach is applied to the path-tracking task of an UAV. %The comparison to a standard worst-case induced norm IQC analysis verifies the novel approach.

%\begin{ack}
%This research was funded by the German Federal Ministry for Economic Affairs and Climate Action under grant number 20Y2109E and partially supported by the European Union under Grant No. 101153910. 
%The responsibility for the content of this paper is with its authors. The views and opinions expressed do not necessarily reflect those of the respective authorities. The financial support is greatly appreciated. 
%\end{ack}
\bibliography{References_FiniteHorizonMixedDisturbanceSignal_IQC}             % bib file to
%\bibliography{rocond25_artificalState_literature}             % bib file to produce the bibliography
% with bibtex (preferred)

%\begin{thebibliography}{xx}  % you can also add the bibliography by hand

%\bibitem[Able(1956)]{Abl:56}
%B.C. Able.
%\newblock Nucleic acid content of microscope.
%\newblock \emph{Nature}, 135:\penalty0 7--9, 1956.

%\bibitem[Able et~al.(1954)Able, Tagg, and Rush]{AbTaRu:54}
%B.C. Able, R.A. Tagg, and M.~Rush.
%\newblock Enzyme-catalyzed cellular transanimations.
%\newblock In A.F. Round, editor, \emph{Advances in Enzymology}, volume~2, pages
%  125--247. Academic Press, New York, 3rd edition, 1954.

%\bibitem[Keohane(1958)]{Keo:58}
%R.~Keohane.
%\newblock \emph{Power and Interdependence: World Politics in Transitions}.
%\newblock Little, Brown \& Co., Boston, 1958.

%\bibitem[Powers(1985)]{Pow:85}
%T.~Powers.
%\newblock Is there a way out?
%\newblock \emph{Harpers}, pages 35--47, June 1985.

%\bibitem[Soukhanov(1992)]{Heritage:92}
%A.~H. Soukhanov, editor.
%\newblock \emph{{The American Heritage. Dictionary of the American Language}}.
%\newblock Houghton Mifflin Company, 1992.

%\end{thebibliography}

%\appendix
%\section{A summary of Latin grammar}    % Each appendix must have a short title.
%\section{Some Latin vocabulary}              % Sections and subsections are supported  
% in the appendices.
\end{document}